\newtheoremstyle{theorem}
  {10pt}
  {10pt}
  {\sl}
  {\parindent}
  {\bf}
  {. }
  { }
  {}
\theoremstyle{theorem}
\newtheorem{theorem}{Theorem}
\newtheorem{corollary}[theorem]{Corollary}
\newtheorem{example}[theorem]{Example}
\newtheoremstyle{defi}
  {10pt}
  {10pt}
  {\rm}
  {\parindent}
  {\bf}
  {. }
  { }
  {}
\theoremstyle{defi}
\newtheorem{definition}[theorem]{Definition}
\begin{document}

\title{L\'{e}vy driven models and derivative pricing}
\author{Alexander Kushpel and Jeremy Levesley \\
Department of Mathematics, University of Leicester\\
Leicester, LE1 7RH, UK\\
ak412@le.ac.uk and jl1@le.ac.uk\\
[2pt] }
\maketitle

\begin{abstract}
In this article we develop a general method for derivative pricing which is
based on results obtained in \cite{klf}. This approach has its roots in
Shannon's Information Theory. The notion of $\lambda$-analyticity of L\'{e}%
vy models is introduced on the basis of which new representations of the
pricing integral are obtained. It is shown that popular in applications L%
\'{e}vy models are $\lambda$-analytic. We apply these results to derive a
general algorithm for pricing of European call options.

\textbf{AMS Subject Classification:} 91G20, 30E10, 60G51, 91G60, 91G80.

\textbf{Key Words and Phrases:} Approximation, Pricing Theory, L\'{e}%
vy-driven models, Information Theory, Wiener spaces.
\end{abstract}


\section{Introduction}

\label{sec1} 

Consider a frictionless market consisting of a riskless bond and stock which
is modeled by an exponential L\'{e}vy process $S_{t}=S_{0}\exp (X_{t})$
under a fixed equivalent martingale measure $\mathbb{Q}$ with a given
constant riskless rate $r>0$. For such market consider a contract (European
call option) which gives to its owner the right but not the obligation to
buy the underlying asset for the fixed price $K$ at the fixed expiry date $T$%
. We need to evaluate its price $F_{call}$. In this case the payoff function
has the form $F(x)=(S_{0}e^{x}-K)_{+}$, where $(a)_{+}:=\max \{a,0\}$, $K$
is the strike price and $x=\ln (S_{t})$.

For a finite measure $\mu $ on $\mathbb{R}$ define its formal Fourier
transform $\mathbf{F}$ and its formal inverse $\mathbf{F}^{-1}$ as 
\begin{equation*}
\mathbf{F}\mu (y)=\int_{\mathbb{R}}e^{-ixy}d\mu (x),\,\,\,\mathbf{F}^{-1}\mu
(x)=\frac{1}{2\pi }\int_{\mathbb{R}}e^{ixy}d\mu (y).
\end{equation*}%
Remind that any L\'{e}vy process $X=\{X_{t}\}_{t\in \mathbb{R}_{+}}$ is
uniquely determined by its characteristic exponent $\psi (x)$ which is
defined as $\mathbb{E}\left[ e^{ixX_{t}}\right] =e^{-t\psi (x)}$, $x\in 
\mathbb{R}$, $t\in \mathbb{R}_{+}$. Let $\chi _{\lbrack -1,1]}$ be the
characteristic function of $[-1,1]$, $a\geq 0$ and $b\in \mathbb{R}$ then
any characteristic exponent $\psi $ of any jump-diffusion process $%
X=\{X_{t}\}_{t\in \mathbb{R}_{+}}$ admits L\'{e}vy-Khintchine's
representation 
\begin{equation}
\psi (y)=-\frac{1}{2}ay^{2}-iby-\int_{\mathbb{R}}\left( 1-e^{iyx}+iy\chi
_{\lbrack -1,1]}(x)\right) \Pi (dx),  \label{2-kh}
\end{equation}%
where $\Pi :\mathbb{R}\rightarrow \mathbb{R}$ such that 
\begin{equation}
\int_{\mathbb{R}}\min \{1,x^{2}\}\Pi (dx)<\infty ,\,\,\,\Pi \left(
\{0\}\right) =0.  \label{333}
\end{equation}%
The corresponding density function $p_{t}$ can be written as 
\begin{equation*}
p_{t}(y)=\frac{1}{2\pi }\int_{\mathbb{R}}e^{iy\xi }\mathbb{E}[e^{i\xi
X_{t}}]d\xi =\mathbf{F}^{-1}\left( \exp (-t\psi (\cdot )\right) (y)
\end{equation*}%
\begin{equation}
=\frac{1}{2\pi }\int_{\mathbb{R}}\exp \left( iy\xi -t\int_{\mathbb{R}%
}(1-e^{i\xi x}+i\xi x\chi _{\lbrack -1,1]}(x))\Pi (dx)\right) d\xi .
\label{4-dens}
\end{equation}%
See \cite{McKean}, \cite{sato} for more information.


\section{The results}

\label{sec2}

Let $F(\cdot )$ be a reward function then the no-arbitrage price $V$ of the
claim with the terminal payoff $F(S_{0}e^{X_{T}})$ is the expectation of the
discounted terminal payoff $e^{-rT}F(S_{0}e^{X_{T}})$ given that $X_{0}=0$,
i.e., 
\begin{equation*}
V=\mathbb{E}^{\mathbb{Q}}\left[ e^{-rT}F(X_{T})|X_{0}=0\right]
\end{equation*}%
\begin{equation}
=e^{-rT}\mathbb{E}^{\mathbb{Q}}\left[ F\left( S_{0}e^{X_{T}}\right) \right]
=e^{-rT}\int_{\mathbb{R}}p_{T}^{\mathbb{Q}}(y)F\left( S_{0}e^{y}\right) dy,
\label{price}
\end{equation}%
where $p_{T}^{\mathbb{Q}}(y)$ is the density function generated by the
chosen equivalent martingale measure $\mathbb{Q}$, i.e. such measure that
for the discounted process $\widetilde{S}%
_{t}=e^{-rt}S_{t}=e^{-rt}S_{0}e^{X_{t}}$ and a fixed filtration $\{\mathcal{F%
}_{l}\}_{l\in \mathbb{R}_{+}}$ the martingale condition holds, 
\begin{equation*}
\widetilde{S}_{t}=\mathbb{E}^{\mathbb{Q}}\left[ f\left( \widetilde{S}_{t}|%
\mathcal{F}_{l}\right) \right] ,\,\,\,\forall 0\leq l<t\leq T,
\end{equation*}%
where $T>0$ is the maturity time. In particular, let $F(\zeta )=(\zeta
-K)_{+} $ then applying (\ref{price}) the price of the European call option
can be formally given by 
\begin{equation}
F_{call}=e^{-rT}K\int_{\ln (K/S_{0})}^{\infty }p_{T}(y)\left( e^{\ln
(S_{0}/K)+y}-1\right) dy.  \label{integral2}
\end{equation}%
In applications it is important to construct such pricing theory which
includes customary used reward functions $F$, for instance European call
payoff $F(S_{0}e^{x})=(S_{0}e^{x}-K)_{+}$. This kind of payoff has an
exponential grows as $x\rightarrow \infty $. Hence the integral (\ref%
{integral2}) can be understood just in the sense of generalized functions.
To guarantee the existence of the integral (\ref{integral2}) we need to
assume that the function $e^{-t\psi (z)}$ admits an analytic extension onto
the strip $\{z|0\leq \Im z\leq a,a>1\}$.

For a fixed $R>0$ consider two piecewise smooth curves 
\begin{equation*}
\lambda _{+}(\theta )=f(\theta )+i(\alpha _{+}+a_{+}(\theta
)):[-R,R]\rightarrow \{z|\Im z>0\},\,\,\alpha _{+}>0
\end{equation*}%
\begin{equation*}
\lambda _{-}(\theta )=g(\theta )+(i\alpha _{-}+a_{-}(\theta
)):[-R,R]\rightarrow \{z|\Im z<0\},\,\,\alpha _{-}<0,
\end{equation*}%
where $f(\theta )\leq 0$ if $\theta \leq 0$ and $f(\theta )\geq 0$ if $%
\theta \geq 0$, $g(\theta )\leq 0$ if $\theta \leq 0$ and $g(\theta )\geq 0$
if $\theta \geq 0$, $a_{+}(\theta )\geq 0$ is an increasing function on $%
[0,R]$ and decreasing on $[-R,0]$, $a_{-}(\theta )\leq 0$ is an increasing
function on $[-R,0]$ and decreasing on $[0,R]$. Put 
\begin{equation*}
\varrho _{+}:=\left\vert \lambda _{+}\left( R\right) \right\vert ,\varrho
_{-}:=\left\vert \lambda _{-}\left( R\right) \right\vert .
\end{equation*}%
Assume that 
\begin{equation*}
\lim_{R\rightarrow \infty }\varrho _{+}=\infty ,\,\,\,\lim_{R\rightarrow
\infty }\varrho _{-}=\infty .
\end{equation*}%
Consider six contours 
\begin{equation*}
\gamma _{1}(R):=\left\{ z|z=\varrho _{+}e^{i\phi },\,\phi \in \lbrack 
\mathrm{arg}(f(R)+i(\alpha _{+}+a_{+}(R))),0]\right\} ,
\end{equation*}%
\begin{equation*}
\gamma _{2}(R):=[\varrho _{+},\varrho _{-}],
\end{equation*}%
\begin{equation*}
\gamma _{3}(R):=\left\{ z|z=\varrho _{-}e^{i\phi },\,\phi \in \lbrack 0,%
\mathrm{arg}(g(R)+i(\alpha _{-}+a_{-}(R)))]\right\} ,
\end{equation*}%
\begin{equation*}
\gamma _{4}(R):=\left\{ z|z=\varrho _{+}e^{i\phi },\,\phi \in \lbrack 
\mathrm{arg}(g(-R)+i(\alpha _{+}+a_{+}(-R))),\pi ]\right\}
\end{equation*}%
\begin{equation*}
\gamma _{5}(R):=[-\varrho _{+},-\varrho _{-}],
\end{equation*}%
\begin{equation*}
\gamma _{6}(R):=\left\{ z|z=\varrho _{+}e^{i\phi },\,\phi \in \lbrack \pi ,%
\mathrm{arg}(g(-R)+i(\alpha _{+}+a_{+}(-R)))]\right\} .
\end{equation*}

\begin{definition}
\label{d1} We say that a L\'{e}vy process $X=\{X_{t}\}_{t\in \mathbb{R}_{+}}$
is $(\lambda _{-},\lambda _{+})$-analytic if for any $R>0$ its
characteristic exponent $\psi (z)$ admits analytic extension into the domain 
$\Omega _{R}\ni 0$ bounded by $\lambda _{+}\left( \cdot \right) \cup \lambda
_{+}\left( \cdot \right) \cup \bigcup_{k=1}^{6}\gamma _{k}(R)$ and 
\begin{equation*}
\lim_{R\rightarrow \infty }\int_{\gamma _{k}(R)}e^{iyz-\tau \psi
(z)}dz=0,\,\,1\leq k\leq 6,\tau >0.
\end{equation*}
\end{definition}

Next statement gives a useful representation of the density function $%
p_{\tau}(y)$ given by (\ref{4-dens}).

\begin{theorem}
\label{representation} Let $X=\{X_{t}\}_{t \in \mathbb{R}_{+}}$ be a $%
(\lambda_{-},\lambda_{+})$-analytic process then 
\begin{equation*}
p_{\tau}(y)=\frac{1}{2\pi(e^{\alpha_{+}y}+e^{\alpha_{-}y})} \int_{\mathbb{R}%
} \left(e^{iy(f(\theta)+ia_{+}(\theta))}N(\theta)
+e^{iy(g(\theta)+ia_{-}(\theta))}M(\theta)\right) d\theta,
\end{equation*}
where 
\begin{equation*}
N(\theta):=e^{-\tau \psi(f(\theta)+i\alpha_{+} +ia_{+}(\theta))}(\dot{f}%
(\theta)+i\dot{a}_{+}(\theta))
\end{equation*}
and 
\begin{equation*}
M(\theta):=e^{-\tau \psi(g(\theta)+i\alpha_{-} +ia_{-}(\theta))} (\dot{g}%
(\theta)+i\dot{a}_{-}(\theta)).
\end{equation*}
\end{theorem}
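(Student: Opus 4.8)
The plan is to express $2\pi p_\tau(y)=\int_{\mathbb R}e^{iy\xi-\tau\psi(\xi)}\,d\xi$, which is what (\ref{4-dens}) says, by deforming the line of integration up onto $\lambda_+$ and down onto $\lambda_-$, and then to combine the two resulting expressions with the weights $e^{\alpha_+y}$ and $e^{\alpha_-y}$. The key observation is that on $\lambda_+$ one has $e^{iyz}=e^{-\alpha_+y}e^{iy(f(\theta)+ia_+(\theta))}$ and $dz=(\dot f(\theta)+i\dot a_+(\theta))\,d\theta$, so that $e^{iyz-\tau\psi(z)}\,dz=e^{-\alpha_+y}e^{iy(f(\theta)+ia_+(\theta))}N(\theta)\,d\theta$, and likewise $e^{iyz-\tau\psi(z)}\,dz=e^{-\alpha_-y}e^{iy(g(\theta)+ia_-(\theta))}M(\theta)\,d\theta$ on $\lambda_-$; hence once the deformation is justified the formula is immediate.

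To justify it, fix $R>0$ and put $h(z):=e^{iyz-\tau\psi(z)}$, which by Definition \ref{d1} is holomorphic on $\Omega_R$. Since $\Omega_R$ contains $0$ and is squeezed between $\lambda_+$ (with $\Im z\ge\alpha_+>0$) and $\lambda_-$ (with $\Im z\le\alpha_-<0$), a segment of the real axis lies in $\Omega_R$, and I would cut $\Omega_R$ along it into the part above $\mathbb R$ and the part below $\mathbb R$. Applying Cauchy's integral theorem on each part expresses $\int_{\lambda_+([-R,R])}h\,dz$ and $\int_{\lambda_-([-R,R])}h\,dz$ (both oriented by increasing $\theta$) as the integral of $h$ over the relevant real segment plus integrals of $h$ over the arcs $\gamma_1(R),\dots,\gamma_6(R)$.

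Letting $R\to\infty$, each arc integral tends to $0$ by Definition \ref{d1}, while the real-segment integrals tend to $\int_{\mathbb R}h(\xi)\,d\xi=2\pi p_\tau(y)$, so
\[
\int_{\lambda_+}h\,dz=\int_{\lambda_-}h\,dz=2\pi p_\tau(y).
\]
Feeding in the two parametrisations above gives $\int_{\mathbb R}e^{iy(f(\theta)+ia_+(\theta))}N(\theta)\,d\theta=2\pi e^{\alpha_+y}p_\tau(y)$ and $\int_{\mathbb R}e^{iy(g(\theta)+ia_-(\theta))}M(\theta)\,d\theta=2\pi e^{\alpha_-y}p_\tau(y)$. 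Adding these and dividing by $2\pi(e^{\alpha_+y}+e^{\alpha_-y})$, which never vanishes because $\alpha_-<0<\alpha_+$, yields the claim.

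The main obstacle is not the algebra but the limit $R\to\infty$: one must know that $h$ is absolutely integrable along $\mathbb R$ and along each of $\lambda_+,\lambda_-$ so that the finite-$R$ Cauchy identities pass to the limit, and also that the integrand extends continuously to the pieces of $\partial\Omega_R$ used in Cauchy's theorem. Integrability on $\mathbb R$ is immediate since $p_\tau$ is a probability density; the vanishing of the six arc contributions is exactly what the second clause of Definition \ref{d1} supplies; and integrability along the two curves is precisely where the decay of $e^{-\tau\psi}$ built into $(\lambda_-,\lambda_+)$-analyticity must be invoked. Everything else is a routine change of variables.
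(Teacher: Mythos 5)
Your proposal is correct and follows essentially the same route as the paper: apply Cauchy's theorem to the region between the real axis and each of $\lambda_{+}$, $\lambda_{-}$, use the arc-vanishing condition in Definition \ref{d1} to pass to the limit $R\to\infty$, parametrise each curve to pull out the factors $e^{-\alpha_{\pm}y}$, and average the two resulting representations of $p_{\tau}(y)$. Your added remarks on integrability and the continuity of the integrand up to the boundary are a welcome (if brief) tightening of a step the paper passes over silently.
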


\begin{proof}
Let $\gamma_{7}(R):=\{z|z=\lambda_{+}(\theta),\theta \in [-R,R]\}$.
Since the process $X=\{X_{t}\}_{t \in \mathbb{R}}$ is $(\lambda_{-},\lambda_{+})$-analytic then using Cauchy's theorem we get
\[
\oint_{\gamma_{1} \cup [\varrho_{+},-\varrho_{+}] \cup \gamma_{6} \cup \gamma_{7}} e^{iyz-\tau \psi(z)}dz=0.
\]
Applying $(\lambda_{-},\lambda_{+})$-analyticity and letting $R \rightarrow \infty$ we get
\[
p_{\tau}(y)=
\frac{1}{2\pi} \int_{\mathbb{R}} e^{iy\xi - \tau \psi(\xi)}d\xi
=\frac{1}{2\pi}\,\lim_{R \rightarrow \infty}\,
\int_{[-R,R]} e^{iy\xi - \tau \psi(\xi)}d\xi
\]
\[
=
\frac{e^{-\alpha_{+}y}}{2\pi} \int_{\mathbb{R}} e^{iyf(\theta)-ya_{+}(\theta)-\tau \psi(f(\theta)+i\alpha_{+}+ia_{+}(\theta))}
(\dot{f}(\theta)+i\dot{a}_{+}(\theta))d\theta.
\]
Similarly,
\[
p_{\tau}(y)=
\frac{e^{-\alpha_{-}y}}{2\pi} \int_{\mathbb{R}} e^{iyf(\theta)-ya_{-}(\theta)-\tau \psi(g(\theta)+i\alpha_{-}+ia_{-}(\theta))}
(\dot{g}(\theta)+i\dot{a}_{-}(\theta))d\theta.
\]
Hence
\[
p_{\tau}(y)=\frac{1}{2\pi(e^{\alpha_{+}y}+e^{\alpha_{-}y})}
\]
\[
\times \int_{\mathbb{R}} (e^{iy(f(\theta)+ia_{+}(\theta))}
e^{-\tau \psi(f(\theta)+i\alpha_{+} +ia_{+}(\theta))}(\dot{f}(\theta)+i\dot{a}_{+}(\theta))
\]
\[
+e^{iy(g(\theta)+ia_{-}(\theta))}e^{-\tau \psi(g(\theta)+i\alpha_{-} +ia_{-}(\theta))}
(\dot{g}(\theta)+i\dot{a}_{-}(\theta))
d\theta.
\]
\end{proof}

\begin{corollary}
\label{corollary 1} Let $X=\{X_{t}\}_{t \in \mathbb{R}_{+}}$ be a $%
(0,\lambda_{+})$-analytic process, $\lambda_{+}(\theta)=f(\theta)+i(%
\alpha_{+}+a_{+}(\theta))$ then 
\begin{equation*}
p_{\tau}(y)=\frac{e^{-\alpha_{+}y}}{2\pi} \int_{\mathbb{R}%
}e^{iyf(\theta)-ya_{+}(\theta)-\tau
\psi(f(\theta)+i\alpha_{+}+ia_{+}(\theta))} (\dot{f}(\theta)+i\dot{a}%
_{+}(\theta))d\theta.
\end{equation*}
\end{corollary}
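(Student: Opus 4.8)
The plan is to specialize Theorem~\ref{representation} to the case $\lambda_-=0$, which means $\alpha_-=0$, $g\equiv 0$ and $a_-\equiv 0$. First I would observe that with these choices the lower contour $\lambda_-$ degenerates onto (a neighbourhood of) the real axis, so the term $M(\theta)$ in Theorem~\ref{representation} carries the factor $\dot g(\theta)+i\dot a_-(\theta)=0$ and therefore vanishes identically. Consequently the contribution of the $M$-term to the integral in Theorem~\ref{representation} is zero.

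Next I would simplify the prefactor. Setting $\alpha_-=0$ gives $e^{\alpha_- y}=1$, but more to the point, re-examining the proof of Theorem~\ref{representation}: the representation there was obtained by averaging the two one-sided formulas, each of which is individually an exact expression for $p_\tau(y)$. In the $(0,\lambda_+)$-analytic case we simply do not have a genuine lower contour to deform onto, so instead of averaging we keep only the single upper-contour identity
\[
p_\tau(y)=\frac{e^{-\alpha_+ y}}{2\pi}\int_{\mathbb{R}} e^{iyf(\theta)-ya_+(\theta)-\tau\psi(f(\theta)+i\alpha_++ia_+(\theta))}\bigl(\dot f(\theta)+i\dot a_+(\theta)\bigr)\,d\theta,
\]
which is precisely the first displayed formula inside the proof of Theorem~\ref{representation}. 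This identity is valid under $(0,\lambda_+)$-analyticity because the Cauchy-theorem argument on the closed contour $\gamma_1\cup[\varrho_+,-\varrho_+]\cup\gamma_6\cup\gamma_7$ together with the vanishing of the integrals over $\gamma_1,\gamma_6$ (and, in the limit, the identification of the $[\varrho_+,-\varrho_+]$ piece with $-2\pi p_\tau(y)$ via \eqref{4-dens}) only uses the upper half of the construction.

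So the key steps, in order, are: (i) substitute $g\equiv 0$, $a_-\equiv 0$, $\alpha_-=0$ into the hypotheses of Definition~\ref{d1} and note that $(0,\lambda_+)$-analyticity is exactly the restriction of $(\lambda_-,\lambda_+)$-analyticity needed to run the upper-contour half of the argument; (ii) invoke Cauchy's theorem on the relevant closed contour and use the contour-vanishing conditions to discard $\gamma_1$ and $\gamma_6$ as $R\to\infty$; (iii) identify the remaining real-axis integral with $2\pi p_\tau(y)$ using \eqref{4-dens}; (iv) parametrize $\gamma_7$ by $z=\lambda_+(\theta)=f(\theta)+i(\alpha_++a_+(\theta))$, so $dz=(\dot f(\theta)+i\dot a_+(\theta))\,d\theta$ and the exponent becomes $iy f(\theta)-y(\alpha_++a_+(\theta))-\tau\psi(f(\theta)+i\alpha_++ia_+(\theta))$, pulling the constant $e^{-\alpha_+ y}$ out front. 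This yields the stated formula.

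I do not expect a genuine obstacle here, since the corollary is essentially a bookkeeping specialization of Theorem~\ref{representation}; the only point needing a word of care is the justification that the closed-contour integral vanishes and that the arc contributions $\int_{\gamma_1},\int_{\gamma_6}$ tend to zero — but both are granted directly by Definition~\ref{d1} under the $(0,\lambda_+)$-analyticity assumption, so nothing new has to be proved. A secondary point worth stating explicitly is that the one-sided formula used here is already derived (as an intermediate display) inside the proof of Theorem~\ref{representation}, so strictly speaking the corollary follows by reading off that intermediate line under the degenerate choice of $\lambda_-$.
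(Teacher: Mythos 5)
Your proposal is correct and matches the paper's (implicit) argument: the corollary is precisely the first displayed one-sided identity inside the proof of Theorem~\ref{representation}, obtained from Cauchy's theorem on the contour $\gamma_{1}\cup[\varrho_{+},-\varrho_{+}]\cup\gamma_{6}\cup\gamma_{7}$ and the vanishing of the arc integrals granted by Definition~\ref{d1}. You are also right to note that one must read off this intermediate display rather than substitute $g\equiv 0$, $a_{-}\equiv 0$, $\alpha_{-}=0$ into the final averaged formula, since the latter would produce the prefactor $1/(e^{\alpha_{+}y}+1)$ instead of $e^{-\alpha_{+}y}$.
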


We shall use Wiener spaces $W_{\sigma }(\mathbb{R})\subset L_{2}(\mathbb{R})$%
, i.e. entire functions of exponential type $\sigma >0$ whose Fourier
transform has support on $[-\sigma ,\sigma ]$. An important property of
Wiener spaces is given by the following generalization of the
Whittaker-Kotel'nikov-Shannon formula \cite{klf}.

\begin{theorem}
\label{whitteker} Let $\lambda_{\sigma} \in L_{2}(\mathbb{R})$ and $%
\lambda_{\sigma}(y)=1$ if $y \in [-\sigma,\sigma]$, $\lambda_{\sigma}(y)=0$
if $y \in \mathbb{R}\setminus [-2\sigma,2\sigma]$ and 
\begin{equation*}
J_{m,\lambda_{\sigma}}(x)=\frac{1}{2\sigma}\left(\mathbf{F}
\lambda_{\sigma}\right)\left(-x+\frac{\pi m}{\sigma}\right).
\end{equation*}
Then 
\begin{equation*}
f(x)=\sum_{m \in \mathbb{Z}} f\left(\frac{\pi m}{\sigma}\right)J_{m,%
\lambda_{\sigma}}(x)
\end{equation*}
for any $f \in W_{\sigma}(\mathbb{R})$.
\end{theorem}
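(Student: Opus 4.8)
The plan is to reduce the claimed reconstruction formula to the $L_{2}$ Fourier‑series expansion of $\widehat{f}:=\mathbf{F}f$ over one period, together with the definition of $\mathbf{F}$. First I would invoke the hypothesis $f\in W_{\sigma}(\mathbb{R})$: by the Paley--Wiener description of Wiener spaces, $\widehat{f}\in L_{2}(\mathbb{R})$ is supported in $[-\sigma,\sigma]$ and $f=\mathbf{F}^{-1}\widehat{f}$, so that
\[
f(x)=\frac{1}{2\pi}\int_{-\sigma}^{\sigma}e^{ix\xi}\,\widehat{f}(\xi)\,d\xi ,\qquad x\in\mathbb{R}.
\]
The system $\{e^{i\pi m\xi/\sigma}\}_{m\in\mathbb{Z}}$ is complete and orthogonal in $L_{2}([-\sigma,\sigma])$, and the $m$‑th coefficient of $\widehat{f}$ with respect to it is $\frac{1}{2\sigma}\int_{-\sigma}^{\sigma}\widehat{f}(\xi)e^{-i\pi m\xi/\sigma}\,d\xi=\frac{\pi}{\sigma}\,f\!\left(-\tfrac{\pi m}{\sigma}\right)$, by the inversion formula evaluated at the point $-\pi m/\sigma$. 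Hence, after relabelling $m\mapsto -m$,
\[
\widehat{f}(\xi)=\frac{\pi}{\sigma}\sum_{m\in\mathbb{Z}}f\!\left(\tfrac{\pi m}{\sigma}\right)e^{-i\pi m\xi/\sigma}\qquad\text{in }L_{2}([-\sigma,\sigma]),
\]
the coefficient sequence $\big(f(\pi m/\sigma)\big)_{m}$ lying in $\ell_{2}$ by Parseval.

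Next I would promote this to a global identity on $\mathbb{R}$ by inserting the mask $\lambda_{\sigma}$. The series above is a $2\sigma$‑periodic function whose restriction to $[-\sigma,\sigma]$ equals $\frac{\pi}{\sigma}\widehat{f}$; multiplying it by $\lambda_{\sigma}$ leaves it unchanged on $[-\sigma,\sigma]$ (where $\lambda_{\sigma}\equiv 1$) and forces it to vanish off $[-2\sigma,2\sigma]$ (where $\lambda_{\sigma}\equiv 0$), exactly as $\widehat{f}$ does, because $\mathrm{supp}\,\widehat{f}\subseteq[-\sigma,\sigma]$ and, under the stated hypotheses, every aliased translate of $\widehat{f}$ occurring in the periodization is supported where $\lambda_{\sigma}$ annihilates it. This gives
\[
\widehat{f}(\xi)=\frac{\pi}{\sigma}\,\lambda_{\sigma}(\xi)\sum_{m\in\mathbb{Z}}f\!\left(\tfrac{\pi m}{\sigma}\right)e^{-i\pi m\xi/\sigma}\qquad\text{for a.e. }\xi\in\mathbb{R}.
\]
Substituting into the inversion integral for $f$ and interchanging summation and integration,
\[
f(x)=\frac{1}{2\sigma}\sum_{m\in\mathbb{Z}}f\!\left(\tfrac{\pi m}{\sigma}\right)\int_{\mathbb{R}}e^{i\xi(x-\pi m/\sigma)}\lambda_{\sigma}(\xi)\,d\xi=\frac{1}{2\sigma}\sum_{m\in\mathbb{Z}}f\!\left(\tfrac{\pi m}{\sigma}\right)(\mathbf{F}\lambda_{\sigma})\!\left(-x+\tfrac{\pi m}{\sigma}\right),
\]
which is precisely $\sum_{m}f(\pi m/\sigma)J_{m,\lambda_{\sigma}}(x)$.

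Two points need care. The term‑by‑term manipulation is justified because $\big(f(\pi m/\sigma)\big)_{m}\in\ell_{2}$ and $\lambda_{\sigma}\in L_{2}(\mathbb{R})$, so $\lambda_{\sigma}(\cdot)\sum_{m}f(\pi m/\sigma)e^{-i\pi m(\cdot)/\sigma}$ converges in $L_{2}(\mathbb{R})$; applying $\mathbf{F}^{-1}$, which up to the normalising constant is a bounded map from the $L_{2}$‑functions supported in $[-2\sigma,2\sigma]$ onto $W_{2\sigma}(\mathbb{R})$, commutes with this $L_{2}$‑limit, so the series converges to $f$ in $L_{2}$ and, by the Bernstein--Nikol'skii inequality for functions of exponential type $2\sigma$, uniformly on $\mathbb{R}$ as well. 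The genuinely delicate step is the passage from the $L_{2}([-\sigma,\sigma])$ expansion of $\widehat{f}$ to the global identity: one must verify that, outside $[-\sigma,\sigma]$, each periodic copy $\widehat{f}(\,\cdot-2\sigma k\,)$ occurring in the series is killed by $\lambda_{\sigma}$, and this is exactly where the support conditions on $\lambda_{\sigma}$ and $\widehat{f}$ and the sampling rate are used. That is the step I would expect to demand the most attention in a fully rigorous write‑up.
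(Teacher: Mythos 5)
The paper does not actually prove this statement --- it is imported wholesale from \cite{klf} --- so there is no internal proof to compare against; I can only assess your argument on its own terms. Your overall strategy (expand $\widehat{f}=\mathbf{F}f$ in the exponential system $\{e^{i\pi m\xi/\sigma}\}$ on $[-\sigma,\sigma]$, identify the coefficients with samples of $f$ via the inversion formula, reinstate the support by multiplying by $\lambda_\sigma$, and invert) is the standard route, and your coefficient computation, the relabelling $m\mapsto -m$, and the final interchange of sum and integral are all correct and adequately justified by the $\ell_2$/$L_2$ remarks you make.

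The gap is exactly at the step you yourself flag, and it is not merely delicate --- it fails under the stated hypotheses. The series $\frac{\pi}{\sigma}\sum_m f(\pi m/\sigma)e^{-i\pi m\xi/\sigma}$ converges to the $2\sigma$-periodization of $\widehat{f}|_{[-\sigma,\sigma]}$, whose first aliased copy $\widehat{f}(\cdot-2\sigma)$ is supported on $[\sigma,3\sigma]$ and therefore overlaps the transition band $[\sigma,2\sigma]$, where $\lambda_\sigma$ is completely unconstrained (it is only required to vanish \emph{outside} $[-2\sigma,2\sigma]$). So $\lambda_\sigma$ does \emph{not} annihilate the aliases, and the identity $\widehat{f}=\frac{\pi}{\sigma}\lambda_\sigma\cdot\sum_m f(\pi m/\sigma)e^{-i\pi m(\cdot)/\sigma}$ is false in general. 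Concretely, take $\widehat{f}=\chi_{[-\sigma,\sigma]}$, i.e.\ $f(x)=\sin(\sigma x)/(\pi x)\in W_\sigma(\mathbb{R})$; then $f(\pi m/\sigma)=0$ for $m\neq 0$ and the right-hand side of the theorem collapses to $\frac{\sigma}{\pi}J_{0,\lambda_\sigma}(x)=(\mathbf{F}^{-1}\lambda_\sigma)(x)$, which equals $f$ only if $\lambda_\sigma=\chi_{[-\sigma,\sigma]}$ a.e. So the statement, as written, is only true in that degenerate case (which is in fact the only case the paper uses, in the proof of Theorem \ref{approximant}); for a genuine window with a transition band one needs oversampling, i.e.\ sample spacing $\pi/(2\sigma)$ so that the periodization has period $4\sigma$ and the aliased copies land outside $\mathrm{supp}\,\lambda_\sigma$. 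Your write-up asserts the annihilation property as if it followed from the hypotheses; either restrict to $\lambda_\sigma=\chi_{[-\sigma,\sigma]}$ (recovering classical Whittaker--Kotel'nikov--Shannon) or adjust the sampling rate --- as stated, the step cannot be repaired.
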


Applying Theorems \ref{representation} and \ref{whitteker} we get the
following result.

\begin{theorem}
\label{approximant} Let $X=\{X_{t}\}_{t \in \mathbb{R}_{+}}$ be a $%
(\lambda_{-},\lambda_{+})$-analytic process. Then the approximant $%
p_{\tau}^{\ast}(y)$ for the density function $p_{\tau}(y)$ is given by 
\begin{equation*}
p_{\tau}^{\ast}(y)=\frac{\chi_{[-\sigma,\sigma]}(y)}{2(e^{\alpha_{+}y}+e^{%
\alpha_{-}y})} \sum_{k \in \mathbb{Z}}
\left(e^{iy(f(\theta)-\theta)-ya_{+}(\pi k/\sigma)} N\left(\frac{\pi k}{%
\sigma}\right) \right.
\end{equation*}
\begin{equation*}
\left. +e^{iy(g(\theta)-\theta)-ya_{-}(\pi k/\sigma)} M\left(\frac{\pi k}{%
\sigma}\right)\right) e^{i\pi k y/\sigma}.
\end{equation*}
\end{theorem}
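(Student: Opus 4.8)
The plan is to combine the integral representation of $p_\tau$ from Theorem~\ref{representation} with the sampling formula of Theorem~\ref{whitteker}, making the identification that lets the latter apply. Starting from the representation in Theorem~\ref{representation},
\[
p_\tau(y)=\frac{1}{2\pi(e^{\alpha_+ y}+e^{\alpha_- y})}\int_{\mathbb{R}}\Bigl(e^{iy(f(\theta)+ia_+(\theta))}N(\theta)+e^{iy(g(\theta)+ia_-(\theta))}M(\theta)\Bigr)\,d\theta,
\]
I would first recognize that the two terms are, up to the scalar prefactor, of the form $\int_{\mathbb{R}}e^{iy\theta}\,h(y,\theta)\,d\theta$ where the dependence on $\theta$ has been rewritten so that $e^{iy\theta}$ is the oscillatory kernel. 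Concretely I write $e^{iy(f(\theta)+ia_+(\theta))}=e^{iy(f(\theta)-\theta)-ya_+(\theta)}\,e^{iy\theta}$, and similarly for the $M$ term. The idea is to view the remaining factor $e^{iy(f(\theta)-\theta)-ya_+(\theta)}N(\theta)$, for each fixed $y$, as a function of $\theta$ to be sampled in $\theta$, while $e^{iy\theta}$ plays the role of the complex exponential in the inversion of the Fourier transform.

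Next I would restrict to $y\in[-\sigma,\sigma]$ (this is the origin of the factor $\chi_{[-\sigma,\sigma]}(y)$ in the statement) and pass to the Whittaker--Kotel'nikov--Shannon reconstruction. For $f\in W_\sigma(\mathbb{R})$ one has $\frac{1}{2\pi}\int_{\mathbb{R}}(\mathbf{F}f)(\theta)e^{iy\theta}d\theta=f(y)$, and Theorem~\ref{whitteker} expands such an $f$ through its samples at $\pi k/\sigma$. Applying the reconstruction to each of the two terms of $p_\tau$ replaces the integral over $\theta$ by the series $\sum_{k\in\mathbb{Z}}(\cdots)J_{k,\lambda_\sigma}(\cdot)$, and after collecting the sampled values one gets exactly
\[
\sum_{k\in\mathbb{Z}}\Bigl(e^{iy(f(\theta)-\theta)-ya_+(\pi k/\sigma)}N(\tfrac{\pi k}{\sigma})+e^{iy(g(\theta)-\theta)-ya_-(\pi k/\sigma)}M(\tfrac{\pi k}{\sigma})\Bigr)e^{i\pi k y/\sigma},
\]
the $e^{i\pi ky/\sigma}$ being the translated exponential produced by the shift $-x+\pi m/\sigma$ in $J_{m,\lambda_\sigma}$; dividing by $2(e^{\alpha_+ y}+e^{\alpha_- y})$ (the $2\pi$ becomes $2$ because the $1/\sigma$-type normalization of $J_{m,\lambda_\sigma}$ absorbs the rest) yields $p_\tau^\ast(y)$.

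The main obstacle I anticipate is the legitimacy of the sampling step: Theorem~\ref{whitteker} applies only to genuine Paley--Wiener functions of exponential type $\sigma$, and the $\theta$-integrands appearing here are not a priori of that class. So the honest argument is that $p_\tau^\ast$ is defined as the approximant obtained by band-limiting — i.e. by truncating the Fourier data of the two constituent functions to $[-\sigma,\sigma]$ via $\lambda_\sigma$ and then applying Theorem~\ref{whitteker} to the resulting $W_\sigma(\mathbb{R})$ functions — rather than as an identity equal to $p_\tau$. I would make this explicit, so that the content of Theorem~\ref{approximant} is the closed form of that band-limited reconstruction, with the convergence of the series being the $L_2$-convergence guaranteed by Theorem~\ref{whitteker}. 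A secondary point to check is the bookkeeping of constants and of the substitution $\theta\mapsto$ sample node, in particular that the arguments $f(\theta)$ and $g(\theta)$ left inside the exponentials in the statement are to be read consistently with the sampled quantities $N(\pi k/\sigma)$, $M(\pi k/\sigma)$; I would state the convention once and then the rest is routine.
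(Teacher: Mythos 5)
Your proposal follows essentially the same route as the paper: you factor $e^{iy\theta}$ out of each term of the representation in Theorem \ref{representation}, treat the remaining factors $e^{iy(f(\theta)-\theta)-ya_{+}(\theta)}N(\theta)$ and $e^{iy(g(\theta)-\theta)-ya_{-}(\theta)}M(\theta)$ as the functions to be sampled via Theorem \ref{whitteker} with $\lambda_{\sigma}=\chi_{[-\sigma,\sigma]}$, and recover the factor $\chi_{[-\sigma,\sigma]}(y)e^{i\pi ky/\sigma}$ from the inverse transform of the shifted kernel $J_{k,\lambda_{\sigma}}$, exactly as in the paper's proof. Your explicit caveat that the sampled integrand is not a priori in $W_{\sigma}(\mathbb{R})$, so that $p_{\tau}^{\ast}$ must be read as a band-limited approximant rather than an identity, and your remark about reading $f(\theta)$, $g(\theta)$ consistently with the sample nodes $\pi k/\sigma$, are both correct observations about points the paper leaves implicit.
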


\begin{proof}
Observe that
\[
e^{iyf(\theta)-ya_{+}(\theta)}N(\theta)=e^{iy\theta}e^{iy(f(\theta)-\theta)-ya_{+}(\theta)}N(\theta)
\]
and
\[
e^{iyg(\theta)-ya_{-}(\theta)}M(\theta)=e^{iy\theta}e^{iy(f(\theta)-\theta)-ya_{-}(\theta)}M(\theta).
\]
Let
\[
N^{\ast}(\theta):=e^{iy(f(\theta)-\theta)-ya_{+}(\theta)}N(\theta)
\]
and
\[
M^{\ast}(\theta):=e^{iy(g(\theta)-\theta)-ya_{-}(\theta)}M(\theta).
\]
Then
\[
p_{\tau}(y)=\frac{1}{2\pi(e^{\alpha_{+}y}+e^{\alpha_{-}y})}\int_{\mathbb{R}} e^{iy\theta}\left(N^{\ast}(\theta)+M^{\ast}(\theta)\right)d\theta.
\]
Applying Theorem \ref{whitteker} for a fixed $\sigma>0$ we get
\[
p_{\tau}^{\ast}(y)=
\frac{1}{2\sigma(e^{\alpha_{+}y}+e^{\alpha_{-}y})}
\sum_{k \in \mathbb{Z}}
\left(N^{\ast}\left(\frac{\pi k}{\sigma}\right)+
M^{\ast}\left(\frac{\pi k}{\sigma}\right)\right)
\]
\[
\times \frac{1}{2\pi} \int_{\mathbb{R}} e^{iy\theta} J_{k, \lambda_{\sigma}}(\theta)d\theta.
\]
Put $\lambda_{\sigma}(\theta)=\chi_{[-\sigma, \sigma]}(\theta)$.
Since $J_{k,\chi_{[-\sigma,\sigma]}} \in L_{2}(\mathbb{R})$, $\forall k \in \mathbb{Z}$ then from the Plancherel's Theorem we obtain
\[
\frac{1}{2\pi} \int_{\mathbb{R}}
e^{iy\theta} J_{k,\chi_{[-\sigma,\sigma]}}(\theta)d\theta
={\bf F}^{-1} \circ \left({\bf F} \chi_{[-\sigma,\sigma]}\right)
\left(-x+\frac{\pi k}{\sigma}\right)(\theta)
\]
\[
=\chi_{[-\sigma,\sigma]}(y) e^{i\pi k y/\sigma}.
\]
\end{proof}
From Theorem \ref{approximant} and Corollary \ref{corollary 1} we get

\begin{corollary}
\label{corollary2} Let $X=\{X_{t}\}_{t\in \mathbb{R}_{+}}$ be a $(0,\lambda
_{+})$-analytic process and $f(\theta )=\theta $ then 
\begin{equation*}
p_{\tau }^{\ast }(y)=\frac{e^{-\alpha _{+}y}\chi _{\lbrack -\sigma ,\sigma
]}(y)}{2\sigma }\sum_{k\in \mathbb{Z}}e^{-ya_{+}(\pi k/\sigma )}N\left( 
\frac{\pi k}{\sigma }\right) e^{i\pi ky/\sigma }.
\end{equation*}%
In particular, if $\lambda _{+}(\theta )=\theta +i\alpha _{+}$, $\theta \in 
\mathbb{R}$ then 
\begin{equation*}
p_{\tau }^{\ast }(y)=\frac{e^{-\alpha _{+}y}\chi _{\lbrack -\sigma ,\sigma
]}(y)}{2\sigma }\sum_{k\in \mathbb{Z}}e^{i\pi ky/\sigma -\tau \psi (\theta
+i\alpha _{+})}.
\end{equation*}
\end{corollary}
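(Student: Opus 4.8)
The plan is to run the approximation argument of Theorem~\ref{approximant} once more, but to feed it the one-sided density representation of Corollary~\ref{corollary 1} in place of the two-sided representation of Theorem~\ref{representation}. The point is that $(0,\lambda_{+})$-analyticity degenerates the lower curve $\lambda_{-}$ to the real axis, so that the $M$-term vanishes identically and the weight $(e^{\alpha_{+}y}+e^{\alpha_{-}y})^{-1}$ occurring in Theorem~\ref{representation} is replaced by the cleaner factor $e^{-\alpha_{+}y}$ produced by Corollary~\ref{corollary 1}; in addition, the hypothesis $f(\theta)=\theta$ makes the phase factor $e^{iy(f(\theta)-\theta)}$ of Theorem~\ref{approximant} identically $1$.

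Concretely, I would first apply Corollary~\ref{corollary 1} with $f(\theta)=\theta$ and rewrite
\[
p_{\tau}(y)=\frac{e^{-\alpha_{+}y}}{2\pi}\int_{\mathbb{R}}e^{iy\theta}\,N^{\ast}(\theta)\,d\theta ,\qquad N^{\ast}(\theta):=e^{-ya_{+}(\theta)}N(\theta),
\]
where, as before, $N(\theta)=e^{-\tau\psi(\theta+i\alpha_{+}+ia_{+}(\theta))}(\dot{f}(\theta)+i\dot{a}_{+}(\theta))$. I would then expand $N^{\ast}$ in its sampling series on the nodes $\pi k/\sigma$, $k\in\mathbb{Z}$, via Theorem~\ref{whitteker} with $\lambda_{\sigma}=\chi_{[-\sigma,\sigma]}$, which yields the approximant
\[
p_{\tau}^{\ast}(y)=e^{-\alpha_{+}y}\sum_{k\in\mathbb{Z}}N^{\ast}(\pi k/\sigma)\,\frac{1}{2\pi}\int_{\mathbb{R}}e^{iy\theta}\,J_{k,\chi_{[-\sigma,\sigma]}}(\theta)\,d\theta .
\]
The inner integral is handled exactly as in the proof of Theorem~\ref{approximant}: writing $J_{k,\chi_{[-\sigma,\sigma]}}(\theta)=\frac{1}{2\sigma}(\mathbf{F}\chi_{[-\sigma,\sigma]})(-\theta+\pi k/\sigma)$ and using Plancherel's theorem (the integral is a shift of $\mathbf{F}^{-1}\mathbf{F}\chi_{[-\sigma,\sigma]}$) one obtains $\frac{1}{2\pi}\int_{\mathbb{R}}e^{iy\theta}J_{k,\chi_{[-\sigma,\sigma]}}(\theta)\,d\theta=\frac{1}{2\sigma}\chi_{[-\sigma,\sigma]}(y)e^{i\pi ky/\sigma}$. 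Substituting $N^{\ast}(\pi k/\sigma)=e^{-ya_{+}(\pi k/\sigma)}N(\pi k/\sigma)$ and collecting constants yields the first displayed formula. For the special case $\lambda_{+}(\theta)=\theta+i\alpha_{+}$ one has $a_{+}\equiv 0$ and $\dot{f}\equiv 1$, so $N(\theta)=e^{-\tau\psi(\theta+i\alpha_{+})}$ and $a_{+}(\pi k/\sigma)=0$; inserting this and using $e^{-ya_{+}(\pi k/\sigma)}=1$ gives the second formula.

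I do not expect a substantive obstacle, since $p_{\tau}^{\ast}$ is by definition the image of $p_{\tau}$ under the sampling operator of Theorem~\ref{whitteker}, so there is no convergence of the series back to $p_{\tau}$ to be established at this stage. The only issues demanding care are bookkeeping ones: one must keep the normalization $e^{-\alpha_{+}y}/(2\pi)$ furnished by Corollary~\ref{corollary 1} rather than attempting to read the statement off Theorem~\ref{approximant} by merely setting $\alpha_{-}=0$ there (which would leave the incorrect weight $e^{\alpha_{+}y}+1$ in the denominator), and one must track the collapse of the phase factor $e^{iy(f(\theta)-\theta)}$ under the assumption $f(\theta)=\theta$.
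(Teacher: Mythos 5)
Your proposal is correct and follows essentially the same route as the paper, which gives no separate proof but simply asserts the corollary ``From Theorem~\ref{approximant} and Corollary~\ref{corollary 1}'': you specialize the one-sided representation of Corollary~\ref{corollary 1}, note that $f(\theta)=\theta$ kills the phase factor, and rerun the sampling argument from the proof of Theorem~\ref{approximant}. Your care with the normalization (keeping the factor $e^{-\alpha_{+}y}/(2\sigma)$ rather than setting $\alpha_{-}=0$ in Theorem~\ref{approximant}) is exactly the right bookkeeping and in fact reproduces the stated constants more consistently than the paper's own displays.
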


We shall consider here various applications of our general results. A L\'{e}%
vy process is called a KoBoL process of order $0<\nu <2$ if it is a purely
discontinuous (i.e. $a=b=0$ in (\ref{2-kh})) with the L\'{e}vy measure of
the form $c_{+}\Pi ^{+}(dx)+c_{-}\Pi ^{-}(dx)$, where 
\begin{equation*}
\Pi ^{+}(dx)=\left( \max \{x,0\}\right) ^{-\nu -1}e^{-\lambda _{+}x}dx,
\end{equation*}%
\begin{equation*}
\Pi ^{-}(dx)=\left( -\min \{x,0\}\right) ^{-\nu -1}e^{-\lambda _{-}x}dx,
\end{equation*}%
$c_{+},c_{-}>0$ and $\lambda _{-}<0<\lambda _{+}$. It is easy to check that
the condition (\ref{333}) is satisfied. Using integration by parts in (\ref%
{2-kh}) it is possible to show that the corresponding characteristic
exponent $\psi _{\ast }(\xi )$ has the form (see \cite{bl1}) 
\begin{equation*}
\psi _{\ast }(\xi )=-i\mu \xi +c_{+}\Gamma (-\nu )((-\lambda _{-})^{\nu
}-(-\lambda _{-}-i\xi )^{\nu })
\end{equation*}%
\begin{equation*}
+c_{-}\Gamma (-\nu )(\lambda _{+}^{\nu }-(\lambda _{+}+i\xi )^{\nu }),
\end{equation*}%
where $\nu \in (0,2)\setminus \{1\}$ and $\mu \in \mathbb{R}$.

\begin{theorem}
\label{positive} Any KoBoL exponent $\psi _{\ast }$ with parameters $\mu
\geq 0$, $c_{+}=c_{-}=c>0$ and $\nu \in (0,1/2]$ is $(0,\lambda _{+})$%
-analytic L\'{e}vy process, where $\lambda _{+}(z)\in \{z|\Im
z>0,\,\,z\notin \lbrack i\lambda _{+},i\infty )\}$.
\end{theorem}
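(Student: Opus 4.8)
The plan is to verify the two conditions of Definition~\ref{d1} for a suitably chosen family of contours $\lambda_{+}(\theta)=f(\theta)+i(\alpha_{+}+a_{+}(\theta))$, since $\lambda_{-}\equiv 0$ means we only deal with the upper half-plane part of the construction. The analyticity requirement is the easier half: the KoBoL exponent
$$\psi_{\ast}(\xi)=-i\mu\xi+c\,\Gamma(-\nu)\bigl((-\lambda_{-})^{\nu}-(-\lambda_{-}-i\xi)^{\nu}\bigr)+c\,\Gamma(-\nu)\bigl(\lambda_{+}^{\nu}-(\lambda_{+}+i\xi)^{\nu}\bigr)$$
is built from the power functions $z\mapsto(-\lambda_{-}-i z)^{\nu}$ and $z\mapsto(\lambda_{+}+i z)^{\nu}$. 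With the principal branch, $(\lambda_{+}+iz)^{\nu}$ is analytic off the ray where $\lambda_{+}+iz$ is a negative real, i.e. off $z\in i\lambda_{+}+i(0,\infty)=[i\lambda_{+},i\infty)$, which is exactly the excluded ray in the statement; similarly $(-\lambda_{-}-iz)^{\nu}$ is analytic off $z\in -i\lambda_{-}-i(0,\infty)\subset\{\Im z<0\}$, which never obstructs the upper half-plane. So $\psi_{\ast}$ extends analytically to $\{\Im z>0\}\setminus[i\lambda_{+},i\infty)$, and one just has to choose the region $\Omega_{R}$ — equivalently the curve $a_{+}(\theta)$ and the height $\alpha_{+}$ — so that $\Omega_{R}$ stays inside this slit half-plane and still contains $0$. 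Taking, say, $f(\theta)=\theta$, $0<\alpha_{+}<\lambda_{+}$ fixed, and $a_{+}(\theta)$ an even, increasing-on-$[0,R]$ function that grows at most linearly (so that $\Re\lambda_{+}(\theta)$ never vanishes for $\theta\ne0$ and the curve never reaches the real part zero at imaginary part $\ge\lambda_{+}$) will do; the constraint $\nu\le 1/2$ and $\mu\ge 0$ are not needed for analyticity but will matter for the decay estimates below.

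**Next I would** establish the vanishing of the six arc integrals $\int_{\gamma_{k}(R)}e^{iyz-\tau\psi_{\ast}(z)}\,dz\to 0$. The quantity controlling this is $\Re(iyz-\tau\psi_{\ast}(z))=-y\,\Im z-\tau\,\Re\psi_{\ast}(z)$. On the large circular arcs $\gamma_{1},\gamma_{4},\gamma_{6}$ of radius $\varrho_{+}\to\infty$, the dominant term is $-\tau\,\Re\psi_{\ast}(z)$, and for large $|z|$ one has $\psi_{\ast}(z)\sim -c\,\Gamma(-\nu)\bigl((-iz)^{\nu}+(iz)^{\nu}\bigr)$ up to lower-order terms. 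Since $\Gamma(-\nu)<0$ for $\nu\in(0,1)$, and $\Re(iz)^{\nu}+\Re(-iz)^{\nu}$ is a positive multiple of $|z|^{\nu}\cos(\nu\pi/2)\cos(\nu\arg(\pm z)\mp\ldots)$ — this is where $\nu\le 1/2$ enters, because for $\nu\le 1/2$ the angle $\nu\cdot(\text{range of }\arg z)$ stays within $(-\pi/2,\pi/2)$ so that $\Re\psi_{\ast}(z)\asymp -|z|^{\nu}$ with a definite sign throughout the half-plane (both sheets of the power function pull the real part in the favorable direction). Hence $-\tau\,\Re\psi_{\ast}(z)\to -\infty$ like $-\tau|z|^{\nu}$, uniformly in $\arg z$ on the relevant arcs, which beats the $O(|z|)$ length of each arc and the benign factor $e^{-y\Im z}$ — here $\mu\ge0$ ensures the linear term $-i\mu\xi$ contributes $+\mu\,\Im z\ge 0$ to $\Re(-\tau\psi_{\ast})$ wait, sign: $-\tau\Re(-i\mu z)=-\tau\mu\Im z\le 0$ on the upper arcs, so one needs this term dominated by the $|z|^{\nu}$ growth, which fails since $\nu<1$; instead $\mu\ge0$ must be used together with $y$ in the region of interest, or the linear term must be absorbed into a translation of the contour. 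I would handle the linear term by the standard device of noting $|e^{iyz-\tau\psi_{\ast}(z)}|=e^{-y\Im z+\tau\mu\Im z-\tau\Re(\text{power part})}$ wait the sign of $-\tau\cdot(-i\mu z)=i\tau\mu z$ has real part $-\tau\mu\Im z$; on the segments $\gamma_{2},\gamma_{5}$ lying along the real axis at height $O(R)$ — actually $\gamma_{2}=[\varrho_{+},\varrho_{-}]$ collapses when $\lambda_{-}\equiv0$ so $\varrho_{+}=\varrho_{-}$ and $\gamma_{2},\gamma_{5}$ are degenerate — so effectively only the four arcs need estimating, and on each the $|z|^{\nu}$ term with the correct sign dominates every linear-in-$|z|$ contribution for $|z|$ large enough, uniformly, which gives the claimed limit.

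**The main obstacle** I anticipate is making the uniform sign control of $\Re\psi_{\ast}(z)$ on the arcs fully rigorous near the endpoints of the arcs, i.e. where the arc meets the curve $\lambda_{+}$ at argument close to $\pi/2$ (near the excluded ray $[i\lambda_{+},i\infty)$) and where it meets the real axis. Near $\arg z=\pi/2$ one of the two power terms, $(\lambda_{+}+iz)^{\nu}$, has its base $\lambda_{+}+iz$ pointing toward the negative real axis, so $\Re(\lambda_{+}+iz)^{\nu}$ can be small or of unfavorable sign; one must check that the \emph{other} term $(-\lambda_{-}-iz)^{\nu}$, whose base then points into the right half-plane, compensates, and that the net $\Re\psi_{\ast}$ still tends to $-\infty$. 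The constraint $\nu\le 1/2$ is precisely what guarantees $\cos(\nu\theta)$ stays bounded below by a positive constant for $\theta$ in the relevant angular range, so the favorable term wins; I would isolate this in a short lemma estimating $\Re\bigl((-iz)^{\nu}+(iz)^{\nu}\bigr)\ge c_{\nu}|z|^{\nu}$ for $\arg z\in[0,\pi]$ and $\nu\in(0,1/2]$, then feed it into the arc estimates. Everything else — the Cauchy-theorem bookkeeping and the passage to the representation — is routine and already packaged in Definition~\ref{d1} and Theorem~\ref{representation}.
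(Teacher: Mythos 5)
Your plan is essentially the paper's proof: establish analyticity of $\psi_{\ast}$ off the branch cuts $[i\lambda_{+},i\infty)$ and $[i\lambda_{-},-i\infty)$, then kill the large arcs via Cauchy's theorem and the asymptotic $\Re\psi_{\ast}(\varrho e^{i\phi})\sim \varrho\mu\sin\phi+2c\varrho^{\nu}\bigl(-\Gamma(-\nu)\cos(\pi\nu/2)\bigr)\cos(\nu\phi)$, with $\nu\le 1/2$ entering exactly as you say, to keep $\cos(\nu\phi)\ge 0$ for $\phi\in[0,\pi]$. Two remarks. First, your hesitation about the drift is unnecessary: the term $-i\mu\xi$ contributes $\Re(i\tau\mu z)=-\tau\mu\,\Im z\le 0$ to the exponent throughout the upper half-plane when $\mu\ge 0$, so it is favorable and needs no contour translation. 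Second, and this is a genuine gap in your "main obstacle" lemma: since $\Re\bigl((iz)^{\nu}+(-iz)^{\nu}\bigr)=2|z|^{\nu}\cos(\nu\phi)\cos(\nu\pi/2)$ with $\phi=\arg z$, the proposed uniform bound $\ge c_{\nu}|z|^{\nu}$ fails at the endpoint $\nu=1/2$, $\phi=\pi$, where $\cos(\nu\phi)=0$; worse, the resulting arc estimate $\varrho\int_{0}^{\pi}e^{-C\varrho^{1/2}\cos(\phi/2)}\,d\phi=O(\varrho^{1/2})$ does not tend to zero, so the power term alone cannot close the second-quadrant arc when $\nu=1/2$. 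The paper handles that arc differently: it discards the power term and uses only the drift, obtaining $I_{-}(R)\le \pi/(2\tau\varrho\mu)$ (which, note, tacitly requires $\mu>0$ rather than $\mu\ge 0$). So you should either restrict your lemma to $\phi$ bounded away from $\pi$ (where it is true) and patch the remaining sliver near $\phi=\pi$ using the drift term or the factor $e^{-y\varrho\sin\phi}$, or adopt the paper's split of the arc at $\phi=\pi/2$; with that repair your argument coincides with the paper's.
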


\begin{proof}
Clearly, $\psi_{\ast}(\xi)$ is analytic in the domain $\mathbb{C}\setminus \{[i\lambda_{+}, i\infty) \cup [i\lambda_{-}, -i\infty)\}$. Hence it is sufficient to show that  $\lim_{R \rightarrow \infty} I_{+}(R,y,\tau)=0$ and
$\lim_{R \rightarrow \infty} I_{-}(R,y,\tau)=0$
for any $y \geq 0$ (see (\ref{integral2})) and $\tau >0$, where
\begin{equation} \label{int1}
I_{+}(R)=I_{+}(R,y,\tau):= \left|\int_{\xi \in \Gamma_{R}^{+}}
e^{iy\xi} e^{-\tau \psi(\xi)}d\xi \right|,
\end{equation}
\begin{equation} \label{int2}
I_{-}(R)=I_{-}(R,y,\tau):= \left|\int_{\xi \in \Gamma_{R}^{-}}
e^{iy\xi} e^{-\tau \psi(\xi)}d\xi \right|,
\end{equation}
\[
\Gamma_{R}^{+}=\{\xi|\,\xi=\varrho_{+}e^{i\phi}\},\,\,
\phi \in [0, {\rm arg} (f(R)+ia_{+}(R)))
\]
and
\[
\Gamma_{R}^{-}=\{\xi|\,\xi=\varrho_{-}e^{i\phi}\},\,\,\,
\phi \in [\pi, {\rm arg} (f(-R)+ia_{+}(-R))).
\]
 We get estimates for the integral (\ref{int1}) first. Let $\xi=\varrho e^{i\phi}$ then
\[
(-\lambda_{-})^{\nu}-(\lambda_{-}-i\xi)^{\nu} \sim -\varrho^{\nu}e^{i(-\pi/2+\phi)\nu}
\]
and
\[
\lambda_{+}^{\nu}-(\lambda_{+}+i\xi)^{\nu} \sim -\varrho^{\nu}e^{i(\pi/2+\phi)\nu}
\]
as $\varrho \rightarrow \infty$.
Since $c_{+}=c_{-}=c>0$ then
\[
c\Gamma(-\nu)\left((-\lambda_{-})^{\nu}-(\lambda_{-}-i\xi)^{\nu}\right)
+c\Gamma(-\nu)\left(\lambda_{+}^{\nu}-(\lambda_{+}+i\xi)^{\nu}\right)
\]
\[
\sim -c\Gamma(-\nu)\rho^{\nu}e^{i\phi \nu}\left(e^{-i\pi \nu /2}+e^{i\pi \nu/2}\right)
=-2c\Gamma(-\nu)\rho^{\nu}e^{i\phi \nu}\cos\left(\frac{\pi \nu}{2}\right)
\]
Hence
\begin{equation} \label{asymp1}
\Re \psi_{\ast}(\xi) \sim  
\varrho \mu\sin \phi +
\varrho^{\nu} 2c\left(-\Gamma(-\nu)\cos\left(\frac{\pi \nu}{2}\right)\right)\cos (\phi \nu).
\end{equation}
Applying (\ref{asymp1}) we get
\[
I_{+}(R)=
\left|\int_{\phi \in [0, {\rm arg} (f(R)+ia_{+}(R)))}
e^{iy\varrho e^{i\phi}-\tau \psi_{\ast}(\varrho e^{i\phi})}\varrho(-\sin \phi +i\cos \phi)d\phi
\right|
\]
\[
\leq
\varrho \left|
\int_{\phi \in [0, {\rm arg} (f(R)+ia_{+}(R)))}
e^{\Re (iy\varrho e^{i\phi}-\tau \psi_{\ast}(\varrho e^{i\phi}))}
\right|
\]
\begin{equation}  \label{chi}
\leq C
\varrho
\int_{0}^{{\rm arg} (f(R)+ia_{+}(R))}
\chi(y,\varrho,\nu,\phi) d\phi,
\end{equation}
where $C$ is a positive constant and
\[
\chi(y,\varrho,\nu,\phi)
\]
\[
:=
\exp\left(-y\varrho \sin \phi-
\tau \varrho
\mu\sin \phi -
 \varrho^{\nu} 2c \left(-\Gamma(-\nu)\cos\left(\frac{\pi \nu}{2}\right)\right)\cos (\phi \nu)
\right)
\]
Since $[0,{\rm arg}(R+ia_{+}(R))) \subset [0, \pi/2)$ and $y \geq 0$ then
\begin{equation} \label{chi1}
\chi(y,\varrho,\nu,\phi)
\leq
\exp\left(
-2c\tau\varrho^{\nu}\left(-\Gamma(-\nu)\cos\left(\frac{\pi \nu}{2}\right)\right)\cos (\phi \nu)\right)
\end{equation}
and $-\Gamma(-\nu)\cos(\pi \nu/2)>0$, $\nu \in (0,2)$. Consequently, from (\ref{chi}) and
(\ref{chi1}) we get
\[
I_{+}(R) \leq
C\varrho \int_{0}^{{\rm arg}(f(R)+ia_{+}(R))}
e^{2c\tau \varrho^{\nu}\Gamma(-\nu)\cos(\pi \nu/2)(1-2\phi\nu /\pi)}d\phi
\]
\[
\leq
C \varrho
e^{2ctau \varrho^{\nu}\Gamma(-\nu)\cos(\pi \nu/2)}
\int_{0}^{\pi/2}
e^{2c\tau \varrho^{\nu}\Gamma(-\nu)\cos(\pi \nu/2)(-2\phi\nu/\pi)}d\phi
\]
\[
\leq
\frac{C\varrho^{1-\nu}}{2c\tau \Gamma(-\nu)\cos (\pi \nu/2)}
e^{2c\tau \varrho^{\nu} \Gamma(-\nu) \cos(\pi \nu /2)(1-\nu)},
\]
where we used the fact that $\cos (\phi \nu) \geq 1-2\phi \nu/\pi$.
From the last line it is easy to see that $\lim_{R \rightarrow \infty} I_{+}(R)=0$ if $\nu \in (0,1)$. Let us get an upper bound for the integral (\ref{int2}).
Assume that $\nu \in (0,1/2]$ then $-\Gamma(-\nu)\cos(\pi \nu/2)\cos(\phi \nu)>0$
for any $\phi \in [\pi/2,\pi]$. Hence
$
\chi(y,\varrho,\nu,\phi)
\leq
\exp\left(
-\tau\varrho \mu\sin \phi\right)
$
and
\[
I_{-}(R) \leq \int_{\phi \in [\pi/2,\pi]}
e^{-\tau \varrho \mu \sin \phi}d\phi
\]
\[
\leq
\int_{\phi \in [\pi/2,\pi]}
e^{-\tau \varrho^{\nu} \mu (-2\phi/\pi+2)}d\phi
<\frac{\pi}{2\tau \varrho \mu}.
\]
Consequently $\lim_{R \rightarrow \infty}\,I_{-}(R)=0$.
\end{proof}Applying Theorem \ref{positive}, Corollary \ref{corollary2} and (%
\ref{integral2}) we get the following form of the approximant $%
F_{call}^{\ast }$ for $F_{call}$, 
\begin{equation*}
F_{call}^{\ast }=e^{-rT}K\int_{\ln (K/S_{0})}^{\infty }\left( e^{\ln
(S_{0}/K)+y}-1\right) p_{T}^{\ast }(y)dy
\end{equation*}%
\begin{equation*}
=e^{-rT}K\int_{\ln (K/S_{0})}^{\infty }\left( e^{\ln (S_{0}/K)+y}-1\right) 
\frac{e^{-\alpha _{+}y}\chi _{\lbrack -\sigma ,\sigma ]}\left( y\right) }{%
2\sigma }
\end{equation*}%
\begin{equation*}
\times \left( \sum_{k\in \mathbb{Z}}e^{-T\psi \left( f\left( \pi k/\sigma
\right) +i\alpha _{+}+ia_{+}\left( \pi k/\sigma \right) \right) }\left( \dot{%
f}\left( \pi k/\sigma \right) +ia_{+}\left( \pi k/\sigma \right) \right)
\right.
\end{equation*}%
\begin{equation*}
\left. e^{iy\left( f\left( \pi k/\sigma \right) -\pi k/\sigma \right)
-ya_{+}\left( \pi k/\sigma \right) +i\pi ky/\sigma }\right) dy:=I_{1}+I_{2},
\end{equation*}%
where%
\begin{equation*}
I_{1}:=\frac{e^{-rT}S_{0}}{2\sigma }\sum_{k\in \mathbb{Z}}\frac{e^{-T\psi
\left( f\left( \pi k/\sigma \right) +i\alpha _{+}+ia_{+}\left( \pi k/\sigma
\right) \right) }\left( \dot{f}\left( \pi k/\sigma \right) +i\dot{a}%
_{+}\left( \pi k/\sigma \right) \right) }{1-\alpha _{+}-a_{+}\left( \pi
k/\sigma \right) +if\left( \pi k/\sigma \right) }
\end{equation*}%
\begin{equation*}
\times \left( e^{\sigma \left( 1-\alpha _{+}-a_{+}\left( \pi k/\sigma
\right) +if\left( \pi k/\sigma \right) \right) }-\left( \frac{K}{S_{0}}%
\right) ^{1-\alpha _{+}-a_{+}\left( \pi k/\sigma \right) +if\left( \pi
k/\sigma \right) }\right)
\end{equation*}%
\begin{equation*}
I_{2}:=-\frac{e^{-rT}K}{2\sigma }\sum_{k\in \mathbb{Z}}\frac{e^{-T\psi
\left( f\left( \pi k/\sigma \right) +i\alpha _{+}+ia_{+}\left( \pi k/\sigma
\right) \right) }\left( \dot{f}\left( \pi k/\sigma \right) +i\dot{a}%
_{+}\left( \pi k/\sigma \right) \right) }{-\alpha _{+}-a_{+}\left( \pi
k/\sigma \right) +if\left( \pi k/\sigma \right) }
\end{equation*}%
\begin{equation*}
\times \left( e^{\sigma \left( -\alpha _{+}-a_{+}\left( \pi k/\sigma \right)
+if\left( \pi k/\sigma \right) \right) }-\left( \frac{K}{S_{0}}\right)
^{1-\alpha _{+}-a_{+}\left( \pi k/\sigma \right) +if\left( \pi k/\sigma
\right) }\right) .
\end{equation*}

\begin{example}
Remind that $\lambda _{+}(\theta )=f(\theta )+i\alpha _{+}+ia_{+}(\theta )$.
Let $f(\theta )=\theta $, $a_{+}(\theta )\equiv 0$, then $\lambda _{+}\left(
\theta \right) =\theta +i\alpha _{+}$. 

In this case $\dot{f}(\theta )+i\dot{a}_{+}(\theta )=1$, 
\begin{equation*}
I_{1}=\frac{e^{-rT}S_{0}}{2\sigma }\sum_{k\in \mathbb{Z}}\frac{e^{-T\psi
\left( \pi k/\sigma +i\alpha _{+}\right) }\left( \left( -1\right)
^{k}e^{\left( 1-\alpha _{+}\right) \sigma }-\left( K/S_{0}\right) ^{1-\alpha
_{+}+i\pi k/\sigma }\right) }{1-\alpha _{+}+i\pi k/\sigma },
\end{equation*}%
\begin{equation*}
I_{2}=-\frac{e^{-rT}S_{0}}{2\sigma }\sum_{k\in \mathbb{Z}}\frac{e^{-T\psi
\left( \pi k/\sigma +i\alpha _{+}\right) }\left( \left( -1\right)
^{k}e^{\left( 1-\alpha _{+}\right) \sigma }-\left( K/S_{0}\right) ^{-\alpha
_{+}+i\pi k/\sigma }\right) }{-\alpha _{+}+i\pi k/\sigma }
\end{equation*}
and $F_{call}^{\ast }=I_{1}+I_{2}.$
\end{example}

\begin{example}
Let $f(\theta )=\theta $, $a_{+}(\theta )=\theta ^{2}$ then $\lambda
_{+}=\theta +i\alpha _{+}+i\theta ^{2}$. In this case $\dot{\lambda}%
_{+}(\theta )=1+2i\theta $ and the contour of integration is a parabola.

It is easy to check that $F_{call}^{\ast }=I_{1}+I_{2}$, where%
\begin{equation*}
I_{1}=\frac{e^{-rT}S_{0}}{2\sigma }\sum_{k\in \mathbb{Z}}\frac{e^{-T\psi
\left( \pi k/\sigma +i\alpha _{+}+i\left( \pi k/\sigma \right) ^{2}\right)
}\left( 1+2i\pi k/\sigma \right) }{1-\alpha _{+}-\left( \pi k/\sigma \right)
^{2}+i\pi k/\sigma }
\end{equation*}%
\begin{equation*}
\times \left( \left( -1\right) ^{k}e^{\left( 1-\alpha _{+}\right) \sigma
-\pi ^{2}k^{2}/\sigma }-\left( K/S_{0}\right) ^{1-\alpha _{+}-\pi
^{2}k^{2}/\sigma ^{2}+i\pi k/\sigma }\right) ,
\end{equation*}%
\begin{equation*}
I_{2}=-\frac{e^{-rT}K}{2\sigma }\sum_{k\in \mathbb{Z}}\frac{e^{-T\psi \left(
\pi k/\sigma +i\alpha _{+}+i\left( \pi k/\sigma \right) ^{2}\right) }\left(
1+2i\pi k/\sigma \right) }{-\alpha _{+}-\left( \pi k/\sigma \right)
^{2}+i\pi k/\sigma }.
\end{equation*}%
\begin{equation*}
\left( \left( -1\right) ^{k}e^{-\alpha _{+}\sigma -\pi ^{2}k^{2}/\sigma
}-\left( K/S_{0}\right) ^{-\alpha _{+}-\pi ^{2}k^{2}/\sigma ^{2}+i\pi
k/\sigma }\right) .
\end{equation*}
\end{example}

\begin{example}
Let $f(\theta )=\theta $, $a_{+}(\theta )=\cosh (\theta ^{2})$, then $%
\lambda \left( \theta \right) =\theta +\alpha _{+}+\cosh (\theta ^{2}),$ $%
\dot{\lambda}_{+}(\theta )=1+2i\theta \sinh (\theta ^{2})$.%
\begin{equation*}
I_{1}=\frac{e^{-rT}S_{0}}{2\sigma }\sum_{k\in \mathbb{Z}}\frac{e^{-T\psi
\left( \pi k/\sigma +i\alpha _{+}+i\cosh \left( \left( \pi k/\sigma \right)
^{2}\right) \right) }\left( 1+\left( 2i\pi k/\sigma \right) \sinh \left(
\left( \pi k/\sigma \right) ^{2}\right) \right) }{1-\alpha _{+}-\cosh \left(
\left( \pi k/\sigma \right) ^{2}\right) +i\pi k/\sigma }
\end{equation*}%
\begin{equation*}
\times \left( \left( -1\right) ^{k}e^{\left( 1-\alpha _{+}\right) \sigma
-\sigma \cosh \left( \left( \pi k/\sigma \right) ^{2}\right) }-\left(
K/S_{0}\right) ^{1-\alpha _{+}-\cosh \left( \left( \pi k/\sigma \right)
^{2}\right) +i\pi k/\sigma }\right) ,
\end{equation*}%
\begin{equation*}
I_{2}=-\frac{e^{-rT}K}{2\sigma }\sum_{k\in \mathbb{Z}}\frac{e^{-T\psi \left(
\pi k/\sigma +i\alpha _{+}+i\cosh \left( \left( \pi k/\sigma \right)
^{2}\right) \right) }\left( 1+\left( 2i\pi k/\sigma \right) \sinh \left(
\left( \pi k/\sigma \right) ^{2}\right) \right) }{-\alpha _{+}-\cosh \left(
\left( \pi k/\sigma \right) ^{2}\right) +i\pi k/\sigma }.
\end{equation*}%
\begin{equation*}
\left( \left( -1\right) ^{k}e^{-\alpha _{+}\sigma -\sigma \cosh \left(
\left( \pi k/\sigma \right) ^{2}\right) }-\left( K/S_{0}\right) ^{-\alpha
_{+}-\cosh \left( \left( \pi k/\sigma \right) ^{2}\right) +i\pi k/\sigma
}\right) .
\end{equation*}
\end{example}

\section{\protect\bigskip Numerical Examples}

\bigskip

In this section we consider several numerical examples. Consider KoBoL
exponent%
\begin{equation*}
\psi \left( \xi \right) =-i\mu \xi +\Gamma \left( -\nu \right) \left[
c_{+}\left( \left( -\lambda _{-}\right) ^{\nu }-\left( -\lambda _{-}-i\xi
\right) ^{\nu }\right) +c_{-}\left( \lambda _{+}^{\nu }-\left( \lambda
_{+}+i\xi \right) ^{\nu }\right) \right] ,
\end{equation*}%
where $\xi =x+iy\in \mathbb{C}$ and $\nu \in \left( 0,2\right) \setminus
\left\{ 1\right\} .$ Observe that $\psi \left( \xi \right) $ is analytic in 
\begin{equation*}
\Omega :=\left\{ z\left\vert z\in \mathbb{C\setminus }\left\{ \left[
i\lambda _{+},\infty \right) \cup \left[ i\lambda _{-},-\infty \right)
\right\} \right. \right\} .
\end{equation*}%
Let us fix European call option parameters. Put $r=0.1,T=0.5,S_{0}=100.$
Assume that KoBoL parameters are $\nu =0.5,c_{+}=c_{-}=1,\lambda
_{+}=5,\lambda _{-}=-5$ \cite{s2}. To satisfy equivalent martingale measure
condition we put $\mu =0.019721$ \cite{levendorskii 2011}. In this case we
have%
\begin{equation*}
\psi \left( \xi \right) =-i0.019721\left( x+iy\right) 
\end{equation*}%
\begin{equation*}
+\Gamma \left( -0.5\right) \left( 2\times 5^{0.5}-\left( 5-i\left(
x+iy\right) \right) ^{0.5}-\left( 5+i\left( x+iy\right) \right)
^{0.5}\right) 
\end{equation*}%
and $\psi \left( \xi \right) $ is analytic in the strip $\left\vert \func{Re}%
\xi \right\vert \leq 5.$  Numerical examples suggest that $a_{+}=1+\left(
\lambda _{+}+1\right) /3.$ In our case $\lambda _{+}=5,$ hence $a_{+}=3.$
Observe that%
\begin{equation*}
M:=\max \left\vert \func{Re}\left( e^{-0.5\left( -i0.019721\left( i5\right)
+\Gamma \left( -0.5\right) \left( 2\times 5^{0.5}-\left( 5-i\left( i5\right)
\right) ^{0.5}-\left( 5+i\left( i5\right) \right) ^{0.5}\right) \right)
}\right) \right\vert 
\end{equation*}%
\begin{equation*}
=\allowbreak 9.\,\allowbreak 702\,279\,703.
\end{equation*}%
Let $A_{\infty ,\delta }UM$ be the set of functions $f\left( z\right) $
which are analytic in the strip $\func{Im}z<\delta $ and such that $\max
\left\{ \left\vert \func{Re}f\left( z\right) \right\vert \left\vert \func{Im}%
z\leq \delta \right. \right\} \leq M.$ For any $f\in A_{\infty ,\delta }UM$
we have the following inequality%
\begin{equation*}
A_{\sigma }\left( f\right) \leq \frac{4M}{\pi }\sum_{k=0}^{\infty }\frac{%
\left( -1\right) ^{k}}{\left( 2k+1\right) \cosh \left( \left( 2k+1\right)
\sigma \delta \right) }<\frac{4M}{\pi }e^{-\delta \sigma },
\end{equation*}%
where%
\begin{equation*}
A_{\sigma }\left( f\right) :=\inf \left\{ \left\Vert f-g\right\Vert _{\infty
}\left\vert g\in W_{\sigma }\right. \right\} 
\end{equation*}%
is the best approximation of $f$ by the subspace $W_{\sigma }$ (see \cite%
{timan} ). In our case $\delta =\lambda _{+}-a_{+}=5-3=2.$ Fix the error of
approximation $\varepsilon >0.$ Then 
\begin{equation*}
A_{\sigma }\left( e^{-T\psi \left( x+a_{+}i\right) }\right) <\frac{4M}{\pi }%
e^{-\delta \sigma }=\varepsilon ,
\end{equation*}%
or solving for $\sigma $ we get%
\begin{equation*}
\sigma =\delta ^{-1}\ln \left( \frac{4M}{\pi \varepsilon }\right) .
\end{equation*}%
Recall that $\sigma $ determines density of interpolation points in
Whittaker-Kotel'nikov-Shannon formula. It means that the step parameter $h$
is 
\begin{equation*}
h=\frac{\pi }{\sigma }.
\end{equation*}

\begin{center}
\begin{tabular}{|l|l|l|}
\hline
$\varepsilon $ & \textbf{Wiener space parameter }$\sigma =\delta ^{-1}\ln
\left( \frac{4M}{\pi \varepsilon }\right) $ & $h=\frac{\pi }{\sigma }$ \\ 
\hline
$E-3$ & $4.\,\allowbreak 710\,840\,317$ & $0.666\,885\,829\,\allowbreak 7$
\\ \hline
$E-4$ & $5.\,\allowbreak 862\,132\,863$ & $0.535\,912\,905\,\allowbreak 3$
\\ \hline
$E-5$ & $7.\,\allowbreak 013\,425\,410$ & $0.447\,939\,839\,\allowbreak 7$
\\ \hline
$E-6$ & $8.\,\allowbreak 164\,717\,956$ & $0.384\,776\,629\,\allowbreak 2$
\\ \hline
$E-7$ & $9.\,\allowbreak 316\,010\,503$ & $0.337\,225\,108\,\allowbreak 6$
\\ \hline
$E-8$ & $10.\,\allowbreak 467\,303\,05$ & $0.300\,133\,915\,\allowbreak 9$
\\ \hline
$E-9$ & $11.\,\allowbreak 618\,595\,60$ & $0.270\,393\,493\,\allowbreak 5$
\\ \hline
$E-10$ & $12.\,\allowbreak 769\,888\,14$ & $0.246\,015\,675\,\allowbreak 2$
\\ \hline
\end{tabular}
\end{center}

\begin{equation*}
\psi \left( \xi \right) =-i0.019721\left( x+iy\right)
\end{equation*}
\begin{equation*}
+\Gamma \left( -0.5\right) \left( 2\times 5^{0.5}-\left( 5-i\left(
x+iy\right) \right) ^{0.5}-\left( 5+i\left( x+iy\right) \right) ^{0.5}\right)
\end{equation*}

We approximate the density function 
\begin{equation*}
p_{T}\left( x\right) =\frac{1}{2\pi }e^{-a_{+}x}\int_{\mathbb{R}%
}e^{ixy}e^{-T\psi \left( y+a_{+}i\right) }dy
\end{equation*}%
by 
\begin{equation*}
p_{T}^{\ast }\left( x\right) =\frac{1}{2\pi }e^{-a_{+}x}\int_{\mathbb{R}%
}e^{ixy}g\left( y\right) dy,
\end{equation*}%
where $g\left( y\right) \in W_{\sigma }$ interpolates the function 
\begin{equation*}
f\left( y\right) :=\left\{ 
\begin{array}{cc}
e^{-T\psi \left( y+a_{+}i\right) }, & y\in \left[ -A,A\right] , \\ 
0, & y\in \mathbb{R\setminus }\left[ -A,A\right] ,%
\end{array}%
\right.
\end{equation*}%
at the points $x_{k}=k\pi /\sigma ,k\in \mathbb{Z}$ and $A>0$ is fixed.
Clearly, 
\begin{equation*}
\left\vert p_{T}\left( x\right) -p_{T}^{\ast }\left( x\right) \right\vert =%
\frac{1}{2\pi }\left\vert \int_{\mathbb{R}}e^{ixy}\left( e^{-T\psi \left(
y\right) }-g\left( y\right) \right) dy\right\vert
\end{equation*}%
\begin{equation*}
\leq \frac{1}{2\pi }\int_{[-A,A]}\left\vert e^{-T\psi \left( y\right)
}-g\left( y\right) \right\vert dy+\frac{1}{2\pi }\left\vert \int_{\mathbb{%
R\setminus \lbrack }-A,A]}e^{-T\psi \left( y\right) }dy\right\vert .
\end{equation*}%
Since $\left\vert e^{-T\psi \left( y\right) }-g\left( y\right) \right\vert
\leq \varepsilon $ for any $y\in \mathbb{R}$ then for the first integral we
get 
\begin{equation*}
\frac{1}{2\pi }\int_{[-A,A]}\left\vert e^{-T\psi \left( y\right) }-g\left(
y\right) \right\vert dy\leq \frac{A\varepsilon }{\pi }.
\end{equation*}%
To estimate the second integral we assume as before $\nu
=0.5,c_{+}=c_{-}=1,\lambda _{+}=5,\lambda _{-}=-5$ and $T=0.5.$ Calculations
show the following result.

\begin{center}
\bigskip 
\begin{tabular}{|l|l|}
\hline
\textbf{Truncation parameter }$A$ & $\mathbf{\varepsilon }^{\ast }:=\frac{1}{%
2\pi }\left\vert \int_{\mathbb{R\setminus \lbrack }-A,A]}e^{-T\psi \left(
y\right) }dy\right\vert $ \\ \hline
$10$ & $6.\,\allowbreak 626\,537\,364\times 10^{-2}$ \\ \hline
$20$ & $5.\,\allowbreak 781\,601\,106\times 10^{-3}$ \\ \hline
$30$ & $7.\,\allowbreak 180\,593\,247\times 10^{-4}$ \\ \hline
$40$ & $1.\,\allowbreak 138\,385\,230\times 10^{-4}$ \\ \hline
$50$ & $2.\,\allowbreak 153\,105\,090\times 10^{-5}$ \\ \hline
$60$ & $4.\,\allowbreak 657\,594\,108\times 10^{-6}$ \\ \hline
$70$ & $1.\,\allowbreak 120\,585\,522\times 10^{-6}$ \\ \hline
$80$ & $\allowbreak 2.\,\allowbreak 940\,645\,917\times 10^{-7}$ \\ \hline
$90$ & $8.\,\allowbreak 298\,093\,791\times 10^{-8}$ \\ \hline
$100$ & $2.\,\allowbreak 491\,098\,701\times 10^{-8}$ \\ \hline
$110$ & $7.\,\allowbreak 889\,750\,755\times 10^{-9}$ \\ \hline
$120$ & $2.\,\allowbreak 618\,921\,335\times 10^{-9}$ \\ \hline
$130$ & $\allowbreak 9.\,\allowbreak 062\,377\,049\times 10^{-10}$ \\ \hline
\end{tabular}
\end{center}

The total error of approximation of density function is 
\begin{equation*}
\epsilon :=\frac{A\varepsilon }{\pi }+\varepsilon ^{\ast }
\end{equation*}%
Let, in particular, $\varepsilon =10^{-7}$ $\left( \sigma =9.\,\allowbreak
316\,010\,503\right) .$ Then selecting $A=50$ we get $\varepsilon ^{\ast
}=2.\,\allowbreak 153\,105\,090\times 10^{-5}$ and $\epsilon =50\times
10^{-7}\pi ^{-1}+2.\,\allowbreak 153\,105\,090\times 10^{-5}=\allowbreak
2.\,\allowbreak 312\,260\,033\times 10^{-5}.$ We should take $N$ terms in
our approximant, where $\pi N/\sigma =A$, or $N=A\sigma /\pi .$ In our case $%
N=50\times 9.\,\allowbreak 316\,010\,503/\pi =149.$

\end{document}